\newtheorem{theorem}{Theorem}
\newtheorem{example}{Example}
\newtheorem{lemma}{Lemma}
\newtheorem{remark}{Remark}
\newtheorem{corollary}{Corollary}
\newcommand{\RN}[1]{%
  \textup{\uppercase\expandafter{\romannumeral#1}}%
}
\theoremstyle{definition}
\newcommand{\W}{\mathcal{W}}
\newcommand{\Bi}{\mathcal{B}}
\newcommand{\PoA}{\mathsf{PoA}}
\renewcommand{\d}{\ensuremath{\mathbf{d}}}
\newcommand{\D}{\ensuremath{\mathbf{D}}}
\renewcommand{\u}{\ensuremath{\mathbf{u}}}
\newcommand{\yuzhe}[1]{\textcolor{orange}{Yuzhe says: #1}}
\newcommand{\davide}[1]{\textcolor{purple}{Davide says: #1}}
\DeclareMathOperator*{\argmax}{arg\,max}
\title{
%Truth-Tracking in Liquid Democracy with Weighted %Delegations
%Weighting Delegations for Tracking the Truth in Liquid %Democracy
%Tracking Truth by Weighting Delegations \\ in Liquid Democracy
Tracking Truth by Weighting Proxies \\ in Liquid Democracy
}
\author{Yuzhe Zhang and Davide Grossi}
\begin{document}
\pagestyle{plain}

%%%%%%%%%%%%%%%%%%%%%%%%%%%%%%%%%%%%%%%%%%%%%%%%%%%%%%%%%%%%%%%%%%%%%%%%%

% Include a short abstract here (100-300 words):
\begin{abstract}
We study wisdom of the crowd effects in liquid democracy when agents are allowed to apportion weights to proxies by mixing their delegations. We show that in this setting---unlike in the standard one where votes are always delegated in full to one proxy---it becomes possible to identify delegation structures that optimize the truth-tracking accuracy of the group. We contrast this centralized solution with the group accuracy obtained in equilibirium when agents interact by greedily trying to maximize their own individual accuracy through mixed delegations, and study the price of anarchy of these games. While equilibria with mixed delegations may be as bad as in the standard delegations setting, they are never worse and may sometimes be better. 
\end{abstract}

%%%%%%%%%%%%%%%%%%%%%%%%%%%%%%%%%%%%%%%%%%%%%%%%%%%%%%%%%%%%%%%%%%%%%%%%%

\section{Introduction}

Liquid  democracy \cite{blum2016liquid} is a form of proxy voting \cite{miller1969program,Tullock_1992,Alger_2006,green2015direct,cohensius2017proxy} where each proxy is delegable, thereby giving rise to so-called transitive delegations. In such a system each voter may choose to cast her vote directly, or to delegate her vote to a proxy, who may in turn decide whether to vote or delegate, and so pass the votes she has accrued further to yet another proxy. The voters who decide to retain their votes---the so-called gurus---cast their ballots, which now carry the weight given by the number of delegations they accrued. The system has been deployed in online democratic decision-support tools like LiquidFeedback \cite{liquid_feedback}\footnote{\url{https://liquidfeedback.org/}} and has been object in recent years of research in political science and eDemocracy, as well as artificial intelligence (see \cite{paulin20overview} for a recent overview).

\paragraph{Contribution}
Our paper studies aspects of the truth-tracking properties of liquid democracy when agents are allowed to express delegations consisting of the apportionment of shares of a unit weight (i.e., the agent's voting weight) to their proxies. We interpret these weights probabilistically, that is, as mixing of pure delegations. The issue we are after is to understand the extent to which weighted delegations could help the truth-tracking behavior of liquid democracy. We make three main contributions. {\em First}, we show that in this more general setting it is always possible for the agents to achieve maximal group accuracy by centrally coordinating their delegations (Theorem \ref{thm:algo}).
{\em Second}, we extend the strategic model of liquid democracy developed in \cite{bloembergen19rational,zhang21power} to the setting involving weighted delegations, i.e., mixed delegation strategies. In games in which agents greedily try to maximize their individual accuracies we show that weighted delegations enable equilibria that are better in terms of group accuracy, with respect to equilibria with pure delegations (Theorem \ref{thm:maxNE}). This, however, comes at the cost of a higher price of anarchy with respect to games with pure delegations.
{\em Third}, we provide an interpretation of weighted delegations alternative to mixing, in which weights are modeled as shares of voting power that agents apportion to their proxies. This model leads to an alternative notion of utility in delegation games, and therefore to different equilibria. We prove the resulting notion of equilibrium to be weaker than the mixed delegations one (Theorem \ref{thm:bridge}). 

\smallskip

\paragraph{Related Work}
In the last couple of years, several papers in the area of computational social choice have focused on liquid democracy. Three lines of research have broadly been pursued. 
First, papers have pointed to potential weakenesses of voting by liquid democracy, e.g.: delegation cycles and the failure of individual rationality in multi-issue voting \cite{christoff17binary,brill2018pairwise}; poor accuracy of group decisions as compared to those achievable via direct voting in non-strategic settings \cite{kahng18liquid,caragiannis19contribution}, as well as strategic ones \cite{bloembergen19rational}; issues related to power \cite{zhang21power}. 
Second, a number of papers have focused on the development of better behaved delegation schemes, e.g.: delegations with preferences over trustees \cite{brill2018pairwise} or over gurus \cite{escoffier19convergence,escoffier20iterative}; multiple delegations \cite{golz2018fluid}; complex delegations like delegations to a majority of trustees \cite{colley20smart}; dampened delegations \cite{boldi2011viscous}; breadth-first delegations \cite{Grammateia2020aamas}. 
Third, papers have focused on computational aspects of some of the themes mentioned above, like the computation of equilibria in delegation games \cite{escoffier20iterative}.

Our paper is a contribution to the first line of research mentioned above and is most directly related to \cite{kahng18liquid,caragiannis19contribution}, which studied the truth-tracking properties of liquid democracy as opposed to direct voting. In particular \cite{kahng18liquid} showed that no `local' probabilistic procedure to select proxies can guarantee that liquid democracy is, at the same time, never less accurate (in large enough graphs) and sometimes strictly more accurate  than direct voting. These negative results have been further strengthened along several lines in \cite{caragiannis19contribution}. Like these papers we study delegations from a probabilistic perspective, but we use probabilistic delegations to leverage existing generalizations of the Condorcet jury theorem (c.f., \cite{nitzan82optimal,shapley84optimizing}) in which agents are assigned weights depending on their accuracy. Our starting point is to leverage such results to determine delegation graphs which are optimal in terms of truth-tracking behavior, and which are not feasible in the standard delegation setting. We then compare this centralized approach with a decentralized one in which agents choose their delegations with the sole aim of improving their individual accuracy.

\paragraph{Outline}
Section \ref{sec:preliminaries} introduces our model of liquid democracy and describes a centralized mechanism to use weighted delegations in order to achieve optimal group accuracy. Sections \ref{sec:games} and \ref{sec:welfare} studies weighted delegations from a game-theoretic perspective and focuses on the quality of equilibria in those games measured as group accuracy. Finally, Section \ref{sec:shares} studies an alternative model of weighted delegations based on the transfer of voting power. Section \ref{sec:conclusions} concludes.

%%%%%%%%%%%%%%%%%%%%%%%%%%%%%%%%%%%%%%%%%%%%%%%%%%%%%%%%%%%%%%%%%%%%%%%%%

\section{Preliminaries} \label{sec:preliminaries}

\subsection{Basic Model: Binary Voting for Truth-Tracking}

Our model is based on the binary voting setting for truth-tracking \cite{Condorcet1785,grofman83thirteen,elkind16rationalizations}. The setting has already been applied to the study of liquid democracy by \cite{kahng18liquid,bloembergen19rational,caragiannis19contribution}.

\smallskip

A finite set of agents $N=\{1,2,\dots,n\}$ has to vote on whether to accept or reject an issue. The vote is supposed to track the correct state of the world---that is whether it is `best' to accept or reject the issue. The agents' ability to make the right choice (i.e., the agents' error model) is represented by the agent's \emph{accuracy} $q_i \in [0.5,1]$, for $i\in N$. Each agent is endowed with one vote and the result of such an election is determined by simple majority.

When agent $i\in N$ delegates to agent $j \in N$ we write $d_i = j$. Then $\d=(d_1,d_2,\dots, d_n)$ is called a \emph{delegation profile} (or simply \emph{profile}) and is a vector describing each agent's delegation. Equivalently, delegation profiles can be usefully thought of as maps $\d: N \rightarrow N$, where $\d(i)=d_i$.
When $d_i=i$, agent $i$ votes on her own behalf. We call such an agent a \emph{guru}. On the other hand, any agent who is not a guru, is called a {\em delegator}. For profile $\d$, and $C\subseteq N$, let $C^{\d}$ denote all gurus in $C$ in profile $\d$, i.e., $C^{\d}=\{i \in C\mid d_i=i \}$. A delegation profile in which all agents are gurus (i.e., for all $i \in N$ $d_i = i$) is said to be {\em trivial}.

Any profile $\d$ can also be represented by a directed graph. An edge from agent $i$ to $j$ exists whenever $d_i=j$.
Consider then a profile $\d$ where a path exists from $i$ to $j$, i.e., $i\rightarrow i_1 \rightarrow \dots \rightarrow i_k \rightarrow j$.
We call such paths \emph{delegation chains}. When such a chain from $i$ to guru $j$ exists, every agent on this delegation chain (indirectly) delegates to $j$, and we denote $i$'s guru by $d^*_i = \d^*(i) = j$. 
For any profile $\d$ and any guru $i\in N^\d$, let $w_\d(i)$ denote the weight collected by $i$, i.e., $w_\d(i)=|\{j\in N\mid \d^*(j)=i\}|$.
A {\em delegation cycle} is a chain where the first and last agents coincide. In such a case, no agent in the chain is linked to a guru. Therefore no agent linked via a delegation chain to an agent in a delegation cycle has a guru.

In several models for liquid democracy agents are thought of as nodes in an underlying network (e.g., \cite{kahng18liquid,bloembergen19rational}). An underlying network is a directed graph $R=\langle N, E\rangle$, where $E$ are directed edges.
Let $R(i)$ denote the {\em neighborhood} of agent $i\in N$, i.e., $R(i)=\{i\}\cup \{j\in N\mid (i,j)\in E\}$.
In those models, agents are only allowed to delegate agents in their neighborhoods.
In this paper, unless clearly specified, we focus on the setting in which such a network is complete, that is, every agent can delegate to every agent.

\subsection{Weighted Delegations}\label{sec:splitvspure}

We are interested in the setting in which, instead of transferring their full vote to a proxy, agents apportion weights to different proxies where then the weight attached to a proxy corresponds to the probability of delegating to it (cf. the notion of local delegation mechanism in \cite{kahng18liquid}). So $i$'s mixed delegation amounts to a stochastic vector $\D_i=(D_{i1},\dots, D_{in})\in \mathbb{R}^n$ with $\sum_{j \in N} D_{ij} = 1$. We call such delegations weight-apportioning delegations or, simply, {\em weighted delegations}. A profile of weighted delegations (or, {\em weighted profile}) is an $n\times n$-dimensional stochastic matrix $\D=(\D_1,\dots , \D_n)$, and $\mathbb{D}$ is the collection of all weighted profiles.
Moreover we can generate a directed graph, which is called a delegation graph, given a weighted profile $\D$.
All agents are nodes in the delegation graph, and for any pair of $i,j\in N$, an directed edge from $i$ to $j$ exists if $D_{ij}>0$.
Clearly, a standard delegation $d_i = j$ corresponds to the vector in which the whole of $i$'s weight is apportioned to $j$, and so the $j^\mathit{th}$ coordinate has value $1$ and the remaining coordinates have $0$. We will be referring to the standard type of delegations also as {\em pure delegations}.

\paragraph{Agents' weights}
Taking a weighted profile $\D$ to denote the probability of all pure delegation strategies of each agent, we can obtain the probability of any standard profile $\d$ as $\Pr (\d)=\Pi_{i\in N}D_{i\d(i)}$.
Then, given a weighted profile $\D$, for any agent $i\in N$, let $\mathcal{D}_i$ denote all possible profiles, in which $i$ is a guru, i.e., $\mathcal{D}_i=\{\d\mid \forall j\in N, D_{j\d(j)}>0, i\in N^\d\}$.
Hence we denote the {\em weight} of any agent $i\in N$ under $\D$ as $i$'s expected weight in all possible standard profiles, that is: 
\begin{align}
    w_\D(i)=\sum_{\d\in \mathcal{D}_i}\Pr (\d)w_\d(i). \label{eq:weight}
\end{align}
In weighted profiles we call all agents, who have positive weight, gurus and denote their set $N^\D=\{i\in N\mid w_\D(i)>0\}$.

\paragraph{Group accuracy}
We call group accuracy (or majority accuracy) the probability that the group correctly identifies the state of the world via a weighted majority vote. The majority accuracy $q_\D$ of a weighted profile $\D$ is therefore the probability that a randomly sampled coalition of gurus $C \subseteq N^\D$, with property $\sum_{i \in C} w_\D(i)\ge \sum_{i \in N^\D \backslash C} w_\D(i)$, contains agents that all vote correctly, while all agents in $N^\D \backslash C$ vote incorrectly. That is:
\begin{align}
q_\D & =\sum_{C \in \W(\D)}\prod_{i \in C} q_i \prod_{i\in N^\D \backslash C}(1-q_i)
\end{align}
where $\W(\D) = \{ C \subseteq N^\D \mid \sum_{i \in C} w_\D(i) \ge \sum_{i \in N^\D \backslash C}w_\D(i) \}$ is the set of winning coalitions. When dealing with standard delegation profiles $\d$ we will also write $q_\d$ for the majority accuracy determined by $\d$.

\begin{comment} %%%%%%%%%
\davide{Is $q_\D$ equal to the probability that a (simple) majority is correct given $\D$? What I mean is this. $\D$ gives us a probability distribution over delegation graphs. For each delegation graph we can compute the probability that a majority is correct. Then we can compute an expectation over that probability: so what is the probability of a correct majority given the distribution over delegation graph given by $\D$.}
\yuzhe{My guess is they can be different (not 100\% sure). If for expected weight computation, it holds because the process is an affine transformation. But we can observe the computation of majority accuracy is not such computation.}
\end{comment}

\begin{example}
\label{ex:ma}
Consider an example, where a set of agent $N=\{1,2,3,4,5\}$ with accuracies $(0.9,0.9,0.6,0.6,0.6)$. Assume the given weighted profile is $\D$, in which $D_{11}=1$, $D_{22}=1$, $D_{33}=1$, $D_4=(0.4,0.4,0.2,0,0)$, and $D_5=(0.4, 0.3, 0.3, 0, 0)$, that is, agents $1$, $2$ and $3$ always vote on their behalves, and the vectors $D_4$ and $D_5$ show the weighted delegation strategy of agent $4$ and $5$, e.g., agent $4$ delegates to agent $1$ by probability of 40\%, to agent $2$ by 40\%, and to agent $3$ by 20\%. Then under $\D$, we can compute agents' weights by Eq.~\ref{eq:weight} as follows:
$$(w_\D(1),w_\D(2),w_\D(3),w_\D(4),w_\D(5))=(1.8, 1.7, 1.5, 0,0).$$
Therefore the set of gurus are $N^\D=\{1,2,3\}$, and by weighted majority rule, all winning coalitions are $\W(\D)=\{\{1,2\},\{1,3\},\{2,3\},\{1,2,3\}\}$.
Then take an instance of the winning coalition $\{1,2\}$, such that the accuracy of this majority is computed by assuming agent $1$ and $2$ vote correctly, i.e., with probabilities (or their accuracies) $q_1=0.9$ and $q_2=0.9$, while agent $3$ votes incorrectly with probability $1-q_3=0.4$.
That is $q_1q_2(1-q_3)=0.324$.
Hence the majority accuracy is the sum of all accuracies of such majorities, which is as follows,
$$q_\D=\sum_{C\in \W(\D)}\prod_{i\in C}q_i\prod_{i\in N^\D\setminus C}(1-q_i)=0.324+0.054+0.054+0.486=0.918.$$
\end{example}

%%%%%%%%%%%%%%%%%%%%

\subsection{Optimal Truth-Tracking by Weighted Delegations}

Our motivation to study weighted delegations comes from generalizations of the Condorcet jury theorem (e.g., \cite[Th. 13]{grofman83thirteen}) showing that the chance that the voting outcome of the group is correct is maximized if a weighted majority rule is used, instead of simple majority, with agents' weights proportional to $\log \left( \frac{q_i}{1-q_i} \right)$ (cf., \cite{nitzan82optimal,shapley84optimizing}). 

By exploiting this observation we can show that weighted delegations offer a viable approach to the optimal delegation problem, that is: given a set of agents with individual accuracies, what is the (weighted) delegation graph that maximizes group accuracy? It is possible to come up with an algorithm that uses one-hop mixed delegations to reallocate weight from the less accurate to the more accurate voters in the group. Let us define the optimal weight of each agent $i \in N$ by:
\begin{align}
    w^\star_i & = n \cdot \frac{\log \frac{q_i}{1-q_i}}{\sum_{j\in N}\log \frac{q_j}{1-q_j}}
\end{align}
Notice that this weight is larger than $1$ for the more accurate agents whereas it is smaller for the less accurate ones and, as desired, it is proportional to $\log{\frac{q_i}{1-q_i}}$. The idea behind the algorithm (Algorithm \ref{algo:maxMC}) is then to have the agents $i$ with $w^\star_i > 1$ apportion their full weight to themselves, and have each agent $j$ with $w^\star_j < 1$ apportion share $w^\star_i-1$ of the excess weight $1-w^\star_j$ to each agent $i$, normalized by the total excess weight of the $i$ agents.

\begin{algorithm}[t]
\caption{Optimal weights}
\begin{description}
\item[Initialize:]
$w=0$, $N_1=\{i\in N\mid w^\star_i<1\}$, $N_2=\{i\in N\mid w^\star_i>1\}$.
\item[Delegate:]\hfill 
\begin{algorithmic}
\FOR{$i\in N_2$}
\STATE{$D_{i,i}=1$}
\STATE{$w=w+(w^\star_i-1)$}
\ENDFOR
\FOR{$i\in N_1$}
\FOR{$j\in N_2$}
\STATE{$D_{i,j}=(1-w^\star_i)\frac{w^\star_j-1}{w}$}
\ENDFOR
\STATE{$D_{i,i}=1-\sum_{k\in N_2}D_{i,k}$}
\ENDFOR
\end{algorithmic}
\item[Return:]$\D$
\end{description}
\label{algo:maxMC}
\end{algorithm}

\begin{theorem} \label{thm:algo}
Algorithm~\ref{algo:maxMC} outputs an element of: $$\argmax_{\D \in \mathbb{D}} q_\D$$
\end{theorem}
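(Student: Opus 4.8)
The plan is to split the statement into two independent claims: an \emph{achievability} claim, that the profile $\D$ returned by Algorithm~\ref{algo:maxMC} induces exactly the weight vector $w_\D=w^\star$, and an \emph{optimality} claim, that any weighted profile realizing weights proportional to $\log\frac{q_i}{1-q_i}$ attains the maximum of $q_\D$ over $\mathbb{D}$. The first is a direct computation with the algorithm; the second appeals to the generalized Condorcet jury theorem cited above (e.g.\ \cite[Th.~13]{grofman83thirteen}, \cite{nitzan82optimal,shapley84optimizing}). The whole point of the argument is that $w^\star$ is in general non-integer, hence unreachable by pure delegations, but becomes reachable once weighted delegations are allowed.

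For achievability I would first check well-definedness. Using $\sum_{i\in N}w^\star_i=n$ one shows that the total excess $w=\sum_{i\in N_2}(w^\star_i-1)$ equals the total deficit $\sum_{j\in N_1}(1-w^\star_j)$, so every entry $D_{ij}=(1-w^\star_i)\frac{w^\star_j-1}{w}$ lies in $[0,1]$ and each row of $\D$ sums to $1$ (the division by $w$ is never reached when $w=0$, as then $N_1=N_2=\emptyset$). The key structural observation is that the resulting delegation graph is \emph{one-hop}: agents in $N_2$ point only to themselves, and agents in $N_1$ point only to themselves or to some guru in $N_2$; there are no chains of length $\ge 2$ and no cycles. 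This makes $w_\D$ computable in closed form from Eq.~\ref{eq:weight}. For $i\in N_2$, agent $i$ is always a guru, so its expected weight is $1+\sum_{j\in N_1}D_{ji}=1+\frac{w^\star_i-1}{w}\sum_{j\in N_1}(1-w^\star_j)$, which equals $1+(w^\star_i-1)=w^\star_i$ by excess $=$ deficit. For $j\in N_1$ nobody ever delegates to $j$, so $j$ carries weight only when it is its own guru, giving $w_\D(j)=D_{jj}=w^\star_j$; agents with $w^\star_i=1$ (handled by the default self-delegation) keep their unit weight. Hence $w_\D=w^\star$.

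For optimality I would observe that $q_\D$ depends on $\D$ only through $w_\D$, since $\W(\D)$ is determined by the comparisons $\sum_{i\in C}w_\D(i)\ge\sum_{i\in N^\D\setminus C}w_\D(i)$, and that under this reading $q_\D$ is exactly the accuracy of the weighted majority rule with weights $w_\D$. The generalized Condorcet jury theorem states that this accuracy, maximized over all weight vectors, is attained by weights proportional to $\log\frac{q_i}{1-q_i}$. Since multiplying all weights by the positive constant $c=n/\sum_{j}\log\frac{q_j}{1-q_j}$ leaves every winning-coalition comparison unchanged, the vector $w^\star=c\cdot(\log\frac{q_i}{1-q_i})_{i\in N}$ induces the same winning coalitions, hence the same accuracy, as the optimal log-odds weighting. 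Combined with achievability, this makes $q_\D$ maximal.

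The main obstacle I anticipate is the precise reduction in the optimality step rather than the bookkeeping in achievability, and two points need care. First, $q_\D$ is defined from the \emph{expected} weights $w_\D$, not as an expectation of majority accuracy over the delegation graphs drawn from $\D$; I would confirm that it is the former quantity the jury theorem optimizes. Second, the theorem's bound ranges over arbitrary weight vectors, whereas achievable weights are nonnegative and sum to $n$; this is harmless because log-odds weights are nonnegative for $q_i\ge 0.5$ and the optimum is scale-invariant, but it should be stated. I would also verify that the tie convention implicit in the `$\ge$' defining $\W(\D)$ matches the convention under which the jury theorem's optimality holds, since the comparison between $w^\star$ and competing weight vectors that sum to $n$ can be sensitive to how ties in total weight are counted.
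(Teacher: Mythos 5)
Your proposal is correct and follows essentially the same route as the paper's proof: first verify by direct computation that the returned profile realizes $w_\D(i)=w^\star_i$ for every agent (self-retention $D_{ii}=w^\star_i$ for the deficit agents, $1$ plus the incoming shares summing to $w^\star_i-1$ for the surplus agents), then invoke the weighted-majority generalization of the Condorcet jury theorem to conclude optimality. The additional care you take over well-definedness, the scale-invariance of the winning-coalition structure, and the edge case $w^\star_i=1$ goes slightly beyond what the paper writes out, but does not change the argument.
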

\begin{proof}
First we show that in the weighted profile $\D$ returned by Algorithm~\ref{algo:maxMC}, the weight of any agent $i\in N$ is $w^\star_i$.
It can be observed that for any agent $i\in N$, with $w^\star_i<1$, $D_{ii}=w^\star_i$, $D_{ij}=0$ for all $j\in N_1\setminus \{i\}$, and $D_{ij'}=(1-w^\star_i)\frac{w^\star_{j'}-1}{w}$ for all $j'\in N_2$.
On the other hand, for all $i\in N_2$, i.e., agents with $w^\star_i>1$, $D_{ii}=1$.
Then for any $i\in N_1$, the probability of any $\d\in \mathcal{D}_i$, i.e., $i$ is a guru in $\d$, or equivalently $i$ delegates to herself, is $D_{ii}$.
Moreover, for all $\d\in \mathcal{D}_i$, $w_\d(i)=1$.
Thus we have for all $i\in N_1$, $$w_\D(i)=\sum_{d\in \mathcal{D}_i}\Pr(\d)w_\d(i)=D_{ii}=w^\star_i.$$
Then for any $i\in N_2$, $i$ is always a guru since $D_{ii}=1$, and let $\mathcal{D}_{i,ji}$ denote the set of profiles $\mathcal{D}_{i,ji}=\{\d\mid i\in N^\d, \d(j)=i,\forall j'\in N, D_{j'\d(j')>0}\}$ for any $j\in N_1$.
That is an agent $j\in N_1$ delegates to $i\in N_2$ in any profile in $\mathcal{D}_{i,ji}$.
Then the probability of any profile in $\mathcal{D}_{i,ji}$ is $(1-w^\star_j)\frac{w^\star_i-1}{w}$, which is the weight-apportioning delegation from $j$ to $i$.
Also note that $i$ always retains her weight of 1 in any $\d\in \mathcal{D}_i$.
Therefore we have that
$$w_\D(i)=\sum_{\d\in \mathcal{D}_i}\Pr(\d)w_\d(i)=1+\sum_{j\in N_1}(1-w^\star_j)\frac{w^\star_i-1}{w}.$$
Notice that all delegations received by $N_2$ should equal the amount from $N_1$, i.e., $\sum_{j\in N_1}(1-w^\star_j)=w$.
Hence $w_\D(i)=w^\star_i$ for all $i\in N_2$ and therefore, by \cite[Th. 13]{grofman83thirteen}, $q_\D$ is maximal.
\end{proof}

\begin{example}
\label{ex:maxMA}
Let us again use Example~\ref{ex:ma} to show the dynamics of Algorithm~\ref{algo:maxMC}.
We first compute $\log \frac{q_i}{1-q_i}$ for all $i\in N$ as $(0.9542, 0.9542, 0.1761, 0.1761, 0.1761)$.
Then $w^\star_i$ is computed by normalizing the above vector by the entire weight, which is in total $5$ votes, and for all $i\in N$, $w^\star_i$ are $(1.958, 1.958, 0.3613, 0.3613, 0.3613)$.
Hence we can observe that $w^\star_1$ and $w^\star_2$ are larger than their initial weight $1$, and then they do not delegate.
For any other agents $i\in \{3,4,5\}$, they delegate the surplus weight above $w^\star_i=0.3613$ to agents $1$ and $2$ equally since $w^\star_1=w^\star_2$.
Therefore the returned profile is $\D$, in which $D_{11}=D_{22}=1$, $D_3=(0.31935, 0.31935, 0.3613, 0,0)$, $D_4=(0.31935, 0.31935, 0, 0.3613,0)$, and $D_5=(0.31935, 0.31935, 0, 0,0.3613)$.
Then for all $i\in N$, $w_\D(i)=w^\star_i$.
\end{example}

It is worth observing that since pure delegation profiles are just degenerate weighted delegation profiles (so, they are also elements of $\mathbb{D}$) the optimal group accuracy reachable by weighted delegations cannot be possibly worse than the optimal group accuracy obtainable by pure delegations. But, crucially, it can be better, as shown in the following example.
\begin{example}\label{ex:difference}
Let us continue with Example~\ref{ex:maxMA}. In that example the optimal pure delegation profile is the one in which only one delegation happens: an agent with accuracy of $0.6$ delegates to an agent with accuracy of $0.9$.
Then the optimal (pure profile) majority accuracy is $0.918$, which is lower than the optimal accuracy, $0.92664$, of the weighted profile $\D$ in Example~\ref{ex:maxMA}.
\end{example}
Intuitively, pure delegations allow for only discrete weights and can therefore  only roughly approximate a weight distribution among gurus in which each winning coalition $C$ of agents is more accurate than the corresponding losing coalition $N \backslash C$.

\medskip

Observe finally that Algorithm~\ref{algo:maxMC} is polynomial in $N$. So it provides us with a tractable centralized mechanism to solve the optimal delegation problem when weighted delegations are available via mixing. In the next two sections we move from this centralized perspective to a decentralized one, exploring the behavior of weighted delegations in a game-theoretic setting.

\begin{remark}
It is worth to also briefly discuss Algorithm \ref{algo:maxMC} in the context of the GreedyCap algorithm proposed in \cite{kahng18liquid}. GreedyCap is a local probabilistic delegation algorithm, with a centralized element: a cap on the maximal number of delegations. The cap essentially guarantees that delegations do not `break' the wisdom of the crowd effect of independent voters by introducing too much correlation. Algorithm \ref{algo:maxMC} implements a fully centralized approach to group accuracy by assuming delegations, and their weights, to be centrally determined.
\end{remark}

%%%%%%%%%%%%%%%%%%%%%%%%%%%%%%%%%%%%%%%%%%%%%%%%%%%%%%%%%%%%%%%%%%%%%%

\section{Decentralized Interaction} \label{sec:games}

Algorithm~\ref{algo:maxMC} described a centralized mechanism to establish optimal weighted delegations within a group involved in a truth-tracking task. We move now to study the delegation structures that would arise if agents were to autonomously decide how to delegate. 

\subsection{Incentives to Delegate}
\label{sec:incentive}

We assume each agent $i$ holds an evaluation function $u_i: N \rightarrow \mathbb{R}$
that associates a cardinal value to any agent $i\in N$. Intuitively $u_i(j)$, with $i, j \in N$, is the utility that $i$ obtains if $j$ acts as $i$'s guru. For example, $u_i$ may be such that $u_i(j) = q_j$, that is, $i$'s utility would amount to the accuracy of $j$ if $j$ were to act as $i$'s guru (this is for instance the case in the game-theoretic setting of \cite{bloembergen19rational}).
Note that the utility given by function $u$ is independent of the specific delegation chains linking agents to their gurus: in two profiles $\d$ and $\d'$, for agent $i\in N$, $\d(i)\not= \d'(i)$ but $\d^*(i)=\d'^*(i)$, then the utility of $i$ is identical in the two profiles.
Slightly abusing notations, let $\u_i$ be the $n$-dimensional vector, of which the $j$-th ($j\in N$) element is $u_i(j)$.

\smallskip

Given a weighted profile $\D$, let $\mathcal{D}$ be the set of possible pure profiles with positive probability, i.e., $\mathcal{D}=\{\d\mid \forall i\in N, D_{i\d(i)}>0\}$.
Then $U_i(\D)$ is the utility of agent $i\in N$ under any weighted profile $\D$.
$U_i(\D)$ is a mapping $U:\mathbb{D}\rightarrow (N\rightarrow \mathbb{R})$, and defined as 
\begin{align}
    U_i(\D) & =\sum_{\d\in \mathcal{D}}u_i(d_i^*)\Pr(\d).\label{eq:u_prob}
\end{align}
That is, $U_i(\D)$ corresponds to $i$'s expectation over the utility it would obtain given the distribution over its possible gurus that is induced by $\D$.

\subsection{Delegation Games}

Equipped with the notion of utility we move to define delegation games as structures $G = \langle N, S, U \rangle$, where $N = \{1,2,\dots, n\}$ is the set of agents, $S = \mathbb{D}$ is the strategy space\footnote{Here we only consider delegation in complete networks. We can restrict the strategy space to $S\subset \mathbb{D}$ for general networks, in which agents are only allowed to delegate to their neighbors in the network.} of each agent $i\in N$ (so a strategy is the mixing of pure delegations), and $U$ is as in equation \eqref{eq:u_prob}. Observe that, since utility function $U$ corresponds to the expectation over $u$ given a profile $\D$, the corresponding delegation game can be viewed as the mixed-strategy version of the delegation game with pure delegations. By Nash theorem (cf. \cite{osborne94course}) we therefore know that such games always have Nash Equilibria (NE), and we call such NE under weighted strategy $U$-NE. That is not the case for the pure delegation variant of the game (cf. \cite{10.1007/978-3-030-30473-7_19}).

Let us also fix some auxiliary notation. Given a standard profile $\d$, a deviation by $i\in N$ from $\d(i)$ to $d'_i$ yields the profile denoted $(\d_{-i},d'_i)$, in which any agent in $N\setminus \{i\}$ delegates in the same way as in $\d$, but $i$ delegates to $d'_i$.
This notion of deviation can be extended to weighted profiles in the natural way where $(\D_{-i},D'_i)$ is the new profile, and $D'_i$ is an $n$-dimensional stochastic vector.

\begin{example}[A $2$-agent delegation game] \label{ex:game}
Consider a game with $N=\{1,2\}$,  $\u(1)=(x_{11},x_{12})$ and $\u(2)=(x_{21},x_{22})$, where $x_{12} > x_{11}$ and $x_{21} > x_{22}$.
That is, agent $1$ prefers delegating to $2$ over being a guru, and agent $2$ prefers delegating to $1$ over being a guru.

If the strategy of agent $1$ is $\D^1_1=(1,0)$, i.e., agent $1$ delegates to herself by probability of $1$, her utility is $$U_1(\D^1)=x_{11},$$
since in any possible profile $\d$, $\d(1)=1$.
However, if the strategy of agent $1$ is $\D^2_1=(0,1)$, her utility becomes 
$$U_1(\D^2)= y_{22} \cdot x_{12},$$
where the strategy of agent $2$ is $(y_{21},y_{22})$.
This is the case because $y_{21}>0$, in the corresponding delegation graph the two agents form a delegation cycle and their utility is therefore $0$ in that case.

Therefore agent $1$ would choose $\D^1_1$ if $y_{22}<x_{11}/x_{22}$, otherwise she chooses $\D^2_1$.
Similarly, assume that the strategy of agent $1$ is $(y_{11},y_{12})$.
For agent $2$, we obtain that agent $2$ chooses $\D^3_2=(0,1)$, i.e., always being a guru, if $y_{11}<x_{22}/x_{21}$, otherwise she chooses $\D^4_2=(1,0)$.
The best strategies of both agents are shown in Fig.~\ref{fig:ex1}, and the $U$-NE are $\begin{pmatrix}
1&0\\
1&0
\end{pmatrix}$,
$\begin{pmatrix}
\frac{x_{22}}{x_{21}}&1-\frac{x_{22}}{x_{21}}\\
1-\frac{x_{11}}{x_{12}}&\frac{x_{11}}{x_{12}}
\end{pmatrix}$, and
$\begin{pmatrix}
0&1\\
0&1
\end{pmatrix}$.

\begin{figure}
\centering
\includegraphics[scale=0.2]{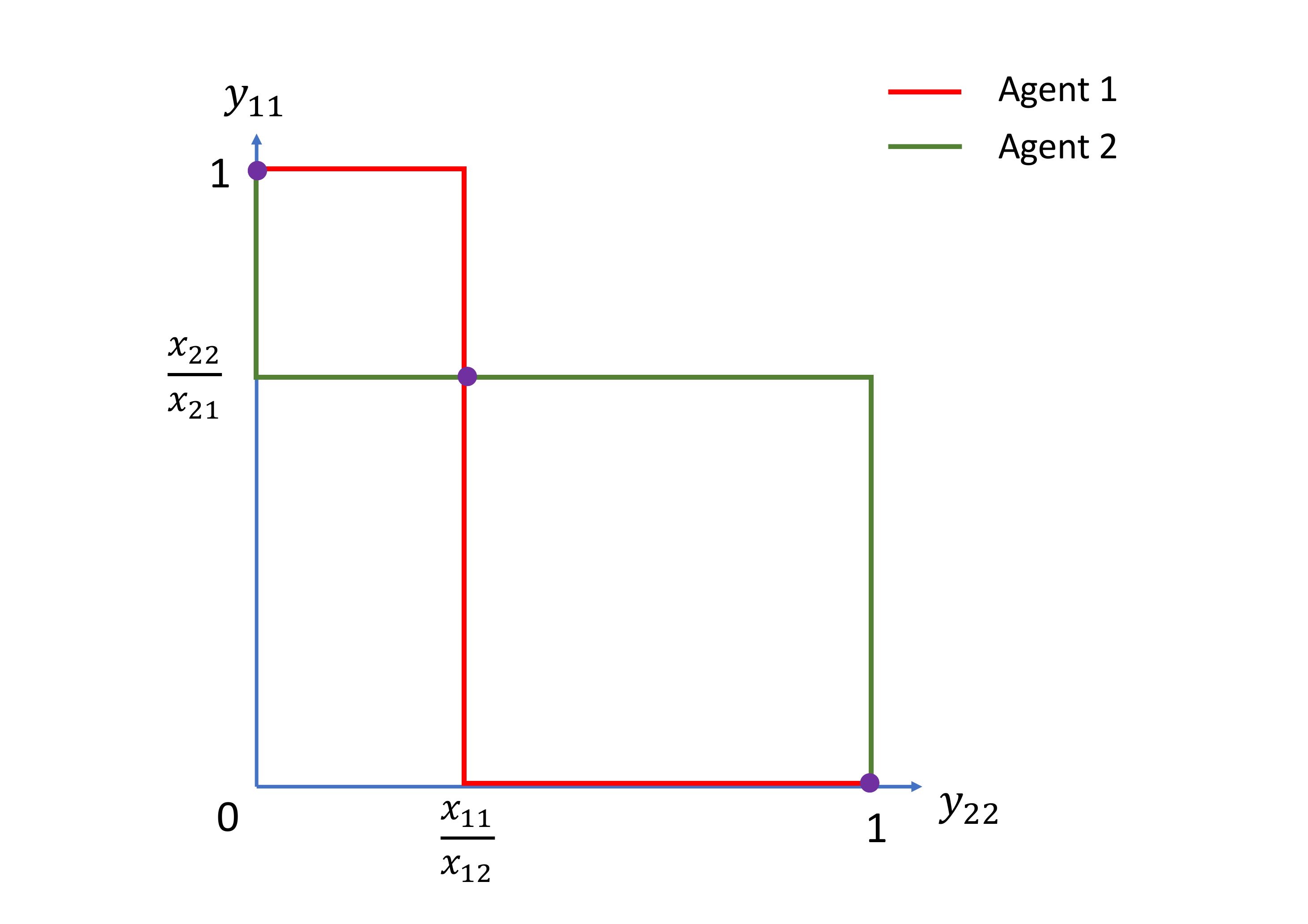}
\caption{Equilibria for the delegation game of Example \ref{ex:game}
}
\label{fig:ex1}
\end{figure}
\end{example}

%%%%%%%%%%%%%%%%

\subsection{Greedy Delegation (GD) Games}
In the remainder of the paper we focus on a specific class of delegation games in which the utility $u_i$ of agent $i$ is set to equal the accuracy of $i$'s guru. So agents delegate by greedily trying to optimize their own individual accuracy. A {\em greedy delegation game} (GD game) is a delegation game as defined above where, however, the utility function $U$ is such that:
\begin{align}
    U_i(\D) & =\sum_{\d\in \mathcal{D}} q_{\d^*(i)} \Pr(\d).
\end{align}
So, in a GD game all agents evaluate gurus in the same way and $i$'s utility corresponds to the expected accuracy of her gurus. The pure delegation variant of these games has been studied in \cite{bloembergen19rational}, where it has been proven that pure strategy NE always exist for them.

\smallskip

We observe that NE in GD games never contain weighted delegation cycles. This observation generalizes results for pure delegation games from \cite{bloembergen19rational} and it will be of use later in Section \ref{sec:shares}.
In what follows, let $[k]=\{1,2,\dots, k\}$ for any $k\in \mathbb{Z}_{++}$.

\begin{lemma}
\label{lemma:acyclic}
NE in GD games are acyclic.
\end{lemma}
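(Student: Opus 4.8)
The plan is to argue by contradiction: suppose $\D$ is a $U$-NE of a GD game that contains a weighted delegation cycle, meaning there is a set of agents $i_1, i_2, \dots, i_k$ (with $i_{k+1} = i_1$) such that $D_{i_t i_{t+1}} > 0$ for every $t \in [k]$. The core intuition I would exploit is the one already visible in Example~\ref{ex:game}: whenever an agent places positive probability on an edge that, with positive probability, closes a cycle leading back to herself, she incurs a utility of $0$ on exactly those realizations (no guru exists, so the accuracy of the guru is vacuously $0$). This means an agent participating in a cycle is ``wasting'' part of her delegation mass on dead-end realizations, and she should be able to profitably shift that mass.

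First I would isolate one agent on the cycle and compute her utility $U_{i_t}(\D)$ by splitting the sum in the GD utility definition over pure profiles $\d \in \mathcal{D}$ into those in which $i_t$ ends up with a well-defined guru and those in which $i_t$ is trapped in a realized cycle (contributing $0$). The key quantitative observation is that since $D_{i_t i_{t+1}} > 0$ and, by following the positive-probability edges around the cycle, there is a strictly positive probability that the pure realization sends $i_t$ around the loop back to itself, agent $i_t$ receives $0$ on a positive-probability event. I would then exhibit an explicit profitable deviation $D'_{i_t}$: move the weight $D_{i_t i_{t+1}}$ currently on the cyclic edge onto $i_t$ voting for herself (i.e.\ onto the coordinate $i_t$), or more carefully onto whichever target guarantees a strictly positive expected accuracy. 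Because every $q_j \ge 0.5 > 0$, voting for oneself yields utility at least $0.5$ on the shifted mass, strictly more than the $0$ obtained on the dead-end realizations, so $U_{i_t}(\D_{-i_t}, D'_{i_t}) > U_{i_t}(\D)$, contradicting that $\D$ is a $U$-NE.

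The main obstacle I anticipate is handling the bookkeeping of \emph{partial} cycles and the fact that shifting mass off the cyclic edge also changes the probabilities of the non-cyclic realizations, not just the cyclic ones. Concretely, when $i_t$ redistributes $D_{i_t i_{t+1}}$, the utility contributions of profiles where $i_t$ previously had a genuine guru are also re-scaled, so I cannot simply compare term-by-term; I would need to verify that the deviation strictly increases the \emph{total} expectation. The clean way around this is to choose the deviation that moves the offending mass onto the single best available target for $i_t$ (the one maximizing expected guru accuracy conditional on not cycling), and to use the fact that the accuracy obtained on the formerly-cyclic realizations jumps from $0$ to something at least $0.5$ while no other realization loses value, which makes the net change strictly positive. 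A subtle point to treat carefully is that an agent may lie on the cycle while other agents' delegations feed into it; I should make sure the deviation argument only relies on $i_t$'s own control over $D_{i_t \cdot}$ and on the existence of a positive-probability returning path, both of which hold under the cycle assumption. Finally, I would note the result strengthens the pure-delegation acyclicity theorem of \cite{bloembergen19rational} simply because pure profiles are degenerate weighted profiles, so the same deviation logic specializes correctly.
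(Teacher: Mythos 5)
Your overall strategy---contradiction via a profitable unilateral deviation by an agent on a positive-probability cycle---is the same as the paper's, but the step that is supposed to produce the strict improvement does not go through as you state it. The claim that ``the accuracy obtained on the formerly-cyclic realizations jumps from $0$ to something at least $0.5$ while no other realization loses value'' is false for an \emph{arbitrary} agent on the cycle. Since $U_{i}(\D)=\sum_j D_{ij}E_j$ where $E_j$ is the expected guru-accuracy of $i$ conditional on the pure realization $i\to j$ (a quantity determined by $\D_{-i}$), moving mass off the cyclic edge onto self (or any other target) does re-price the non-cyclic realizations, and the loss there can dominate the gain on the cyclic ones. Concretely: take $N=\{1,2\}$, $q_1=0.6$, $q_2=0.9$, $\D_1=(0,1)$, $\D_2=(\epsilon,1-\epsilon)$ with $\epsilon<1/3$. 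The edge $1\to 2$ together with $2\to 1$ forms a positive-probability cycle and agent $1$ gets $0$ with probability $\epsilon$, yet her coefficients are $E_1=0.6$ and $E_2=0.9(1-\epsilon)>0.6$, so keeping all her mass on the cyclic edge is her unique best response: she has \emph{no} profitable deviation. (The profile still fails to be a NE, but only because agent $2$ can deviate.) So your argument, applied to an arbitrary $i_t$ on the cycle, breaks down; and your fallback of moving the mass to ``the single best available target'' only shows the deviation is weakly improving---you never establish that the best coordinate strictly beats $E_{i_{t+1}}$, which is exactly the point needing proof.

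What is missing is the selection of \emph{which} agent deviates and \emph{where to}, and this is precisely what the paper's proof supplies. Let $q^*$ be the maximal accuracy reachable from the cycle. If some cycle agent $i$ has $q_i=q^*$, then every coefficient satisfies $E_j\le q^*$, while the cyclic coordinate satisfies $E_{i_{t+1}}\le(1-\pi)q^*<q^*=E_i$ for the positive cycle-closing probability $\pi$, so self-delegation by \emph{that} agent is a strict improvement (this is the clean version of your self-delegation idea, valid only for the maximally accurate reachable agent on the cycle). If instead every cycle agent has accuracy below $q^*$, the paper locates a cycle agent with an off-cycle edge lying on a delegation-graph path to a $q^*$-agent and has her shift all her mass there. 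To repair your proof you need this (or an equivalent) case analysis; the purely local ``zero realizations become positive'' accounting is not sufficient.
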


\begin{proof}
Note that this proof is for general underlying networks, and it henceforth holds for complete networks.
We prove by contradiction.
Assume that in a weighted strategy NE $\D$, there is a cycle in the corresponding delegation graph, i.e., there is a sequence of agents $(i_1,i_2,\dots, i_\ell)$ (let $N^\ell$ denote the set of these agents) such that $D_{i_j,i_{j+1}}>0$ for all $j\in [\ell-1]$and $D_{i_\ell,i_1}>0$.
Observe that since $\D$ is a weighted strategy NE, in the delegation graph of $\D$, for any agent $i\in N$, any edge from the agent is on a path to the highest-accuracy agent, who is reachable from $i$ in the underlying network.
Otherwise some agent in the network must be able to change her strategy to delegate to the reachable highest-accuracy agent, since all agents have identical utility function.
Assume that the highest accuracy the set of agents $N^\ell$ can reach is $q^*$, that is in the delegation graph, there is a path from any agent in $N^\ell$ to an agent with accuracy of $q^*$.
Then we consider two possible cases:
(1) An agent $i\in N^\ell$ has accuracy of $q^*$.
Then by adjusting her weighted strategy to $(\D_{-i}, (\dots, D'_{ii}=1, \dots))$, $i$ can strictly improve her utility since she obtains $0$ in the profiles with the delegation cycle by probability of $\prod_{a\in [\ell]}D_{i_a i_{a+1}}D_{i_\ell i_1}$. Contradiction.
(2) All agents in $N^\ell$ has lower accuracy than $q^*$.
Then in $N^\ell$, we can always find an agent, for instance $i_1$ without loss of generality, such that $\exists k\in R(i_1)\setminus N^\ell$, the edge $(i_1,k)$ is on a path in the delegation graph to an agent with accuracy of $q^*$.
By adjusting $i_1$'s weighted strategy to $(\D_{-i_1}, (\dots, D'_{i_1 k}=1, \dots))$, $i_1$ can also strictly improve her utility. This goes against the assumption that $\D$ is a NE. Contradiction.
\end{proof}

%%%%%%%%%%%%%%%%%%%%%%%%%%%%%%%%

\section{Social Welfare in Greedy Delegation Games} \label{sec:welfare}

We measure social welfare in GD games by group accuracy and study this measure of welfare in equilibrium.

%%%%%%%%%%%%%%%%%%%%%%%%%%%%%%

\subsection{Group Accuracy in Equilibrium}

Weighted delegations make it possible to achieve wisdom of the crowd effects in equilibrium by balancing weight among maximally accurate agents. As a result NE in GD games with weighted profiles can be shown to be never worse that NE with pure delegations and to be better in some cases.
\begin{theorem}
\label{thm:maxNE}
For any GD game and let $N^*\subseteq N$ the set of agents with maximal accuracy. Then any weighted profile $\D$, such that for any $i\in N^*$, $w_\D(i)=n/m$ ($|N^*|=m$), is the NE with maximal accuracy.
\end{theorem}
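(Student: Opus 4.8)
The plan is to split Theorem~\ref{thm:maxNE} into two parts: (i) that any weighted profile $\D$ with $w_\D(i)=n/m$ for every $i\in N^*$ is a NE, and (ii) that it attains the largest group accuracy among all NE. A single preliminary observation drives both parts. Since each agent contributes one unit of weight that, in an acyclic pure profile, reaches exactly one guru, we have $\sum_{i\in N}w_\D(i)=\sum_{\d}\Pr(\d)\cdot|\{j\in N\mid j \text{ has a guru in } \d\}|\le n$, with equality precisely when every positive-probability pure profile is acyclic. Because imposing $w_\D(i)=n/m$ on all $m$ agents of $N^*$ already accounts for the full weight $n$, and weights are nonnegative, this condition forces $\sum_{i\in N}w_\D(i)=n$, hence forces acyclicity of every positive-probability pure profile and $w_\D(i)=0$ for all $i\notin N^*$. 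Consequently $N^\D=N^*$, so in every positive-probability pure profile each agent's guru lies in $N^*$ and has accuracy $q^*:=\max_{i\in N}q_i$.

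For part (i), under such a $\D$ every agent therefore obtains utility exactly $q^*$, which is the maximum any agent can possibly attain (utility equals the expected accuracy of one's guru, and no guru is more accurate than $q^*$). Hence no agent has a profitable deviation and $\D$ is a NE.

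For part (ii) I would first show that in \emph{every} NE $\D$ each agent attains utility $q^*$. For $i\in N^*$ this is immediate: delegating fully to herself guarantees $q^*$ and nothing beats $q^*$, so $U_i(\D)=q^*$, which by acyclicity (Lemma~\ref{lemma:acyclic}) means each $i\in N^*$ reaches a guru in $N^*$ with probability one. For $i\notin N^*$, suppose toward a contradiction $U_i(\D)<q^*$, i.e.\ with positive probability $i$'s guru is some sink $s\notin N^*$, witnessed by a directed path $i\leadsto s$ in the acyclic delegation graph. The key step is to exhibit some $j^*\in N^*$ with \emph{no} directed path to $i$: if every $j\in N^*$ had a path to $i$, concatenating it with $i\leadsto s$ would produce a positive-probability pure profile in which $j$'s guru is $s\notin N^*$ (the concatenation is simple because the graph is acyclic), contradicting $U_j(\D)=q^*$. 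Deviating by $i$ to delegate fully to this $j^*$ creates no cycle, since a new cycle would require a path $j^*\leadsto i$, which does not exist and is unaffected by changing only $i$'s out-edge; thus $i$'s guru now always coincides with $j^*$'s, which lies in $N^*$, giving utility $q^*>U_i(\D)$ and contradicting the NE property. Therefore every NE assigns all weight, summing to $n$, to the $m$ equally accurate gurus of $N^*$, so its group accuracy $q_\D$ equals the weighted-majority accuracy of these $m$ voters as a function only of how $n$ is apportioned among them. By the optimal-weighting result \cite[Th.~13]{grofman83thirteen}, this accuracy is maximized when weights are proportional to $\log\frac{q_i}{1-q_i}$; since all these gurus share accuracy $q^*$, the optimum is equal weights, i.e.\ $w_\D(i)=n/m$ for each $i\in N^*$. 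As the equal-weight profile is itself a NE by part (i), it is the NE of maximal group accuracy.

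The hard part is the cycle-avoidance argument in part (ii): proving that a below-maximal agent can always redirect her full weight to some maximally accurate agent without forming a delegation cycle. I expect this to require exactly the path-based selection of a suitable $j^*\in N^*$ together with careful use of Lemma~\ref{lemma:acyclic}, both to guarantee the existence of $j^*$ and to certify that the post-deviation graph remains acyclic so that $i$'s new guru is well defined and lies in $N^*$.
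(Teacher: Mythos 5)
Your proof is correct, and while it shares the paper's overall skeleton (first certify that the equal-weight profile is a NE because every agent already attains the maximal possible utility $q^*$; then invoke the optimal-weighting results of Grofman et al.\ and Shapley--Grofman to bound all other NE), it is more complete in one important respect. The paper's own argument only compares the equal-weight profile against NE whose guru set is a subset of $N^*$, leaving implicit the claim that \emph{every} NE concentrates all weight on $N^*$; you prove this explicitly via the path-based selection of a $j^*\in N^*$ with no directed path to the deviating agent $i$, using Lemma~\ref{lemma:acyclic} both to guarantee that the concatenated witness path is simple and that the post-deviation graph stays acyclic. That cycle-avoidance argument is sound and fills a genuine gap in the published proof. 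The other difference is cosmetic: where you cite \cite[Th.~13]{grofman83thirteen} to conclude that equal weights are optimal among equally accurate gurus, the paper verifies this directly by comparing each majority bipartition of $N^*$ with its complement; your route is shorter, the paper's is self-contained on that sub-step. Your observation that $\sum_{i\in N}w_\D(i)\le n$ forces $w_\D(j)=0$ for $j\notin N^*$ is also a nice explicit justification of why the hypothesis $w_\D(i)=n/m$ on $N^*$ alone pins down the guru set, which the paper takes for granted.
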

\begin{proof}
Assume that the set of agents $N^*$ has identical highest accuracy of $q^*$.
First notice that any weighted profile $\D$, such that $\sum_{i\in N^*}w_\D(i)=n$, is a mixed NE, since any agent in $N$ then has a set of gurus in $N^*$ and acquires utility of $q^*$.
Therefore any weighted profile $\D^*$, such that for any $i\in N^*$, $w_{\D^*}(i)=n/m$, is a mixed NE.

Then in $\D^*$, we consider any bipartition $(N_1,N_2)\in \Bi(N^*)$ (where $\Bi(N^*)$ denotes the set of all bipartitions in $N^*$), where $N_1$ is the majority, i.e., $\sum_{a\in N_1}w_{\D^*}(a) > \sum_{a\in N_2}w_{\D^*}(a)$.
Therefore the accuracy of this bipartition is $\prod_{a\in N_1}q_a\prod_{a\in N_2}(1-q_a)$.
Since $\forall i,j \in N^*$, $q_i=q_j=q^*$ and $w_{\D^*}(i)=w_{\D^*}(j)$, we have $|N_1|>|N_2|$ ,which leads to
$${q^*}^{|N_1|}(1-{q^*}^{|N_2|})=\prod_{a\in N_1}q_a\prod_{a\in N_2}(1-q_a)> \prod_{a\in N_2}q_a\prod_{a\in N_1}(1-q_a)={q^*}^{|N_2|}(1-{q^*}^{|N_1|}).$$ 
Therefore if $N^*$ is the set of gurus, $\D^*$ maximizes the majority accuracy. By by~\cite[Th. 1]{shapley84optimizing} no NE $\D$ in which $N^\D$ is a strict subset of $N^*$ has higher group accuracy.
\end{proof}

Then the following example shows that there exist GD games in which $\D^*$ has strictly better group accuracy than any NE with pure delegations.
\begin{example}
\label{ex:strictbetterma}
Consider a GD game, where there are $7$ agents, $5$ agents with maximal accuracy ($q^*$). In this example, any pure delegations NE with maximal group accuracy, would involve a pair of maximally accurate agents who get each a weight of $2$.
These two agents form a winning coalition, but they have a lower group accuracy than the remaining three gurus, i.e., ${q^*}^2(1-q^*)^3< {q^*}^3(1-q^*)^2$. So the resulting group accuracy is strictly worse than that of $\D^*$.
\end{example}

%%%%%%%%%%%%%%%%%%%%%%%%%%%%%%%%%%%%%%%%%%%

\subsection{Price of Anarchy}

We define the price of anarchy of GD games as follows:
\begin{align}
    \PoA & = \frac{\max_{\D \in \mathbb{D}} q_\D}{\min_{\D \in \mathit{NE}}q_\D} \label{eq:poa}
\end{align}
where $\mathit{NE}$ denotes the set of equilibria. When restricting to the price of anarchy in games with pure delegations (and therefore one pure delegations NE) we refer to $\PoA^\mathit{pure}$. 

Equation \eqref{eq:poa} gives us a measure of how much group accuracy is 'lost' in equilibrium, in the worst case, with respect to what would be centrally achievable via Algorithm~\ref{algo:maxMC}.

\begin{theorem}
\label{thm:poa1}
When $|N| \to \infty$, $\PoA \to \frac{1}{q^*}$, where $q^*$ is the accuracy of a maximally accurate agent in $N$.
\end{theorem}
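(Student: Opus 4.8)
The plan is to exploit the product form of the price of anarchy, $\PoA = \frac{\max_{\D \in \mathbb{D}} q_\D}{\min_{\D \in \mathit{NE}} q_\D}$, by evaluating numerator and denominator separately. I will argue that for every finite game the denominator equals $q^*$ exactly, so that $\PoA = \frac{1}{q^*}\max_{\D\in\mathbb{D}} q_\D$, and that the numerator converges to $1$ as $|N|\to\infty$. The limit $\PoA \to \frac{1}{q^*}$ then follows immediately.

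For the denominator, the first step is a structural characterization of equilibria: in any NE every realized guru has accuracy exactly $q^*$. Because the network is complete, any agent $i$ can deviate to delegate her full weight to a fixed maximally accurate agent; using Lemma~\ref{lemma:acyclic} (NE are acyclic, so such a target is reached by a guru without forming a cycle) this deviation guarantees $i$ a utility of $q^*$, which is also the maximum attainable since no guru has accuracy above $q^*$. Hence in equilibrium $U_i(\D)=q^*$ for every $i$, and since $U_i(\D)=\sum_{\d}q_{\d^*(i)}\Pr(\d)$ is an average of terms bounded above by $q^*$, each positive-probability profile must route all weight onto maximally accurate agents, i.e.\ $N^\D\subseteq N^*$.

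The second step computes the minimum of $q_\D$ over such equilibria. The profile in which all agents delegate to a single maximally accurate agent is an NE (everyone obtains $q^*$) and yields group accuracy exactly $q^*$, giving $\min_{\D\in\mathit{NE}}q_\D\le q^*$. For the matching lower bound I will show that any weighted majority among voters of common accuracy $q^*\ge \tfrac{1}{2}$ has group accuracy at least $q^*$: under the tie-favouring-correct convention defining $\W(\D)$, spreading weight over several equally accurate gurus can only weakly increase the probability of a correct outcome relative to a single decisive voter. This is the classical monotonicity behind the Condorcet jury theorem and yields $q_\D\ge q^*$ for every NE, hence $\min_{\D\in\mathit{NE}}q_\D=q^*$.

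For the numerator, Theorem~\ref{thm:algo} together with \cite[Th.~13]{grofman83thirteen} identifies $\max_{\D\in\mathbb{D}}q_\D$ with the accuracy of the optimally weighted group, whose weights are proportional to $\log\frac{q_i}{1-q_i}$; applying the Condorcet jury theorem to this log-likelihood-ratio rule gives $\max_{\D\in\mathbb{D}}q_\D\to 1$ as $|N|\to\infty$. This convergence is the main obstacle: it fails when too many agents have accuracy at, or arbitrarily close to, $\tfrac{1}{2}$, since such agents receive weight $\approx 0$ and add no information to the crowd. The statement therefore relies on the standard asymptotic hypothesis that the number of agents with accuracy bounded away from $\tfrac{1}{2}$ grows without bound; under it the numerator tends to $1$, and combining with $\min_{\D\in\mathit{NE}}q_\D=q^*$ yields $\PoA\to\frac{1}{q^*}$.
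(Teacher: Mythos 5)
Your proof takes essentially the same route as the paper's: decompose $\PoA$ into numerator and denominator, identify the worst equilibrium as the single-guru profile with group accuracy $q^*$, and let the optimal weighted profile's accuracy tend to $1$ via the weighted Condorcet jury theorem of \cite[Th.~13]{grofman83thirteen}. You are in fact more careful than the paper on two points --- you actually argue that no equilibrium can have group accuracy below $q^*$ (the paper merely asserts that the single-guru profile is the worst NE), and you correctly flag that convergence of the numerator to $1$ requires a growing number of agents with accuracy bounded away from $\tfrac{1}{2}$, an assumption the paper's statement and proof leave implicit.
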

\begin{proof}
Let $\D$ be the weighted profile for which $q_\D$ is maximal and let $\dot{\D}$ be the equilibrium profile for which $q_{\dot{\D}}$ is minimal. This is the case when all agents delegate to the same guru, which has accuracy $q^*$ since it is in equilibrium by assumption. Then, by \cite[Th. 13]{grofman83thirteen}, and the Condorcet jury theorem, as $|N|\rightarrow \infty$, $q_\D\rightarrow 1$ and by construction $q_{\dot{\D}} =  q^*$. 
\end{proof}
It is worth noticing that the same argument can be applied to the setting with pure delegations, obtaining the same asymptotic value for the price of anarchy.

\smallskip

Intuitively in asymptotic cases, the PoA coincides whether weighted strategy is allowed or not.
Then we consider cases, in which $|N|$ is not asymptotic.
That is the agent size is given.
The following results show that weighted strategy outperforms pure strategy in aspect of PoA.

\smallskip

Finally observe that, in the non-asymptotic case, since weighted delegations enable optimal group accuracy (Theorem \ref{thm:algo}) while pure delegations do not (Example \ref{ex:difference}), the price of anarchy in GD games with weighted delegations is trivially higher than in the case of pure delegations.

\begin{corollary}
$\PoA \geq \PoA^\mathit{pure}$
\end{corollary}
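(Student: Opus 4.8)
The plan is to establish the corollary by comparing the two price-of-anarchy ratios term by term, exploiting the fact that the weighted strategy space strictly contains the pure one. Recall that by definition $\PoA = \frac{\max_{\D \in \mathbb{D}} q_\D}{\min_{\D \in \mathit{NE}} q_\D}$, whereas $\PoA^\mathit{pure}$ is the same ratio but with both optimizations restricted to pure delegation profiles. Since a pure profile is a degenerate weighted profile (every $\D$ of the pure kind lies in $\mathbb{D}$), we have two comparisons to make: one between the numerators and one between the denominators.

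First I would handle the numerator. The centralized optimum $\max_{\D \in \mathbb{D}} q_\D$ is taken over all of $\mathbb{D}$, which includes every pure profile as a special case. Hence the weighted optimum is at least as large as the pure optimum, i.e. $\max_{\D \in \mathbb{D}} q_\D \geq \max_{\d} q_\d$. Theorem~\ref{thm:algo} together with Example~\ref{ex:difference} shows this inequality can be strict, but for the corollary the weak inequality suffices. Second I would address the denominator. Here the goal is to argue that the worst equilibrium in the weighted game is no better than the worst equilibrium in the pure game, i.e. $\min_{\D \in \mathit{NE}} q_\D \leq \min_{\d \in \mathit{NE}^\mathit{pure}} q_\d$. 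The cleanest route is to observe that any pure NE is itself a (degenerate) weighted NE: if no agent can improve by deviating to another pure delegation, then by the expectation-linearity of $U_i$ in equation \eqref{eq:u_prob} no agent can improve by any mixed deviation either, since a mixed best response cannot exceed the best pure response. Consequently the set of weighted equilibria contains all pure equilibria, so the minimum of $q$ over weighted NE ranges over a superset and can only be smaller or equal.

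Combining the two, the numerator of $\PoA$ is at least that of $\PoA^\mathit{pure}$ and the denominator of $\PoA$ is at most that of $\PoA^\mathit{pure}$; since group accuracies are positive, the ratio $\PoA$ dominates $\PoA^\mathit{pure}$, giving $\PoA \geq \PoA^\mathit{pure}$.

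The step I expect to be the main obstacle is the denominator comparison, specifically justifying that every pure NE embeds as a weighted NE. One must verify that the best-response structure is preserved under the embedding: a pure profile that is stable against pure deviations need not, a priori, be stable against mixed deviations. The key fact that rescues this is that $U_i$ is an expectation over pure outcomes and is therefore linear (affine) in agent $i$'s own mixing vector $\D_i$ once the others are fixed; a linear objective over the simplex of mixed strategies attains its maximum at a vertex, i.e. at a pure delegation. Thus if $i$ cannot strictly gain by any pure deviation, $i$ cannot strictly gain by any mixed one, and the pure NE condition implies the weighted NE condition. Care is also warranted because Lemma~\ref{lemma:acyclic} rules out cycles in weighted NE, but pure NE are likewise acyclic in GD games (per \cite{bloembergen19rational}), so no pure NE is lost to the acyclicity constraint when passing to the weighted game.
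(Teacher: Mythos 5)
Your proof is correct, and it is in fact more complete than the paper's own justification. The paper derives the corollary from a one-sentence observation that addresses only the numerator of \eqref{eq:poa}: weighted delegations achieve the optimum (Theorem~\ref{thm:algo}) while pure delegations may not (Example~\ref{ex:difference}), hence $\max_{\D\in\mathbb{D}}q_\D \geq \max_{\d}q_\d$. You make the same numerator comparison, but you also supply the denominator comparison that the paper leaves implicit and that is genuinely needed for the ratio inequality to follow: a larger numerator alone does not imply a larger ratio. Your argument for the denominator --- that every pure NE embeds as a $U$-NE because $U_i$ in \eqref{eq:u_prob} is affine in agent $i$'s own mixing vector $\D_i$ once the other rows are fixed, so a mixed deviation is a convex combination of pure deviations and cannot strictly improve on the best pure response --- is the standard and correct way to see that the set of weighted equilibria contains the pure ones, whence $\min_{\D\in\mathit{NE}}q_\D \leq \min_{\d\in\mathit{NE}^{\mathit{pure}}}q_\d$. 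Your closing caveat about Lemma~\ref{lemma:acyclic} is also handled correctly: acyclicity is a derived property of equilibria, not a constraint on the strategy space, so it cannot exclude any pure NE from the weighted equilibrium set. In short, same decomposition as the paper, but with the missing half of the argument filled in.
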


%%%%%%%%%%%%%%%%%%%%%%%%%%%%%%%%

\section{Discussion: Weights as Shares of Votes} \label{sec:shares}

In this paper we interpreted the weight assigned to proxies in a probabilistic way. However, such weights could also be interpreted as an actual allocation of voting power to proxies. The two views lead to different insights, and in this section we briefly develop this second perspective.

\subsection{Weights as Shares of Votes} \label{sec:share}

We move now to interpret weights as shares of voting rights apportioned to proxies. Before any delegation happens each agent is endowed with $1$ vote and may decide to transfer shares of this $1$ vote to other agents. So in that initial situation, if $i$ delegates to $j$ then $\D_{ij}$ is the share of $i$'s vote that is transferred to agent $j$. But in general, $i$ may also receive delegations from other agents, then $D_{ij}$ would correspond to the share of $i$'s current voting power that is transferred to $j$.

Under this interpretation weighted profiles describe the extent of direct transfers of voting power between nodes.  Formally, by the assumption that each agent is endowed with $1$ vote, we denote the weight of all agents by a vector $w\in \mathbb{R}^n$ and the initial weight is $w^{(0)}=(1,\dots,1)$.
Agents then delegate according to $\D$ and obtain the delegated weight 
\[
w^{(1)}=w^{(0)}\D
\]
where the weight of agent $i$ is $w^{(1)}_i=\sum_{j\in N}w^{(0)}_j D_{ji}$. This captures the one-step transfer of the initial voting power. But in liquid democracy these transfers are not only direct: by transferring power to agent $k$ that in turn transfers power to agent $i$, an agent $j$ indirectly transfers shares of her power to $i$. So for each positive integer $t$ we can compute the $t$-steps transfers as  $w^{(t+1)}=w^{(t)}\D$.
We say then that the weighted profile $\D$ converges if for some positive integer $k$, $w^{(k+1)}=w^{(k)}$.

The above process describes a Markov chain and we can use established results to obtain sufficient conditions for the convergence of weighted profiles (cf. \cite{10.2307/2285509} and \cite{gantmacher00theory,proskurnikov17tutorial}) such that, if a weighted profile $\D$ is irreducible\footnote{Recall that a Markov chain is irreducible if each state is accessible by each other state.} and aperiodic, then we can obtain a stationary weight apportionment in finite steps.\footnote{Some periodic Markov chains may converge in infinite steps. We focus on convergence in finite steps as this has a natural interpretation in our setting as an iterated process in which shares of voting power are delegated until a stationary weight distribution (a fixpoint) is reached.}

If $\D$ converges, and we can therefore obtain the `stationary' transfers of voting power within the group, we can also attach a utility for each agent $i$ given $\D$, as follows. Consider $\chi_i^{(0)}\in \mathbb{R}^n$, where the $i$-th element is $1$.
That is $\chi_i^{(0)}$ denotes the initial weight apportionment of agent $i$, where the totality of $i$'s vote is allocated to $i$ herself.
Then for any positive integer $t$, $\chi_i^{(t+1)}=\chi_i^{(t)}\D$ denotes the delegation of $i$'s weight at round $t$.
If the Markov chain of $\D$ satisfies the above condition, we have that $\exists k\in \mathbb{Z}_{++}$, $\chi_i^{(k+1)}=\chi_i^{(k)}=\chi^{(s)}_i$, where $\chi^{(s)}_i$ denotes the stationary weight distribution of agent $i$.
In this case, we say that $\D$ converges on $i$.
Then in $\chi_i^{(s)}$, each positive element is a guru of $i$ with the corresponding weight.
Let then $\chi_i^{(s)}(j)$ denote the share of $i$'s initial vote that is ultimately apportioned to $j$, namely the $j$-th element in $\chi_i^{(s)}$, for any $j\in N$. 

\begin{remark}
It is worth noticing that this notion of agents' weight trivially coincides with the probabilistic one developed in the earlier sections when delegations are only one-hop, that is, when delegation chains are of length at most one. In particular this makes Algorithm \ref{algo:maxMC} relevant for both models.  
\end{remark}

Now that we have a way to compute the weight of agents we can define their utility as a mapping $\hat{U}: \mathbb{D}\rightarrow (N\rightarrow \mathbb{R})$, such that
\begin{align}
\hat{U}_i(\D)= & \left\{
\begin{array}{ll}
\sum_{j\in N}u_i(j){\chi_i^{(s)}(j)} & \mbox{if $\D$ converges on $i$} \\
0 & \mbox{otherwise}
\end{array}
\right.
\label{eq:u_shares}
\end{align}
Observe that we treat the failure to converge, and therefore the lack of a stationary weight apportionment for an agent, as a situation in which the agent gets $0$ utility as she does not get representation by any guru.
Observe furthermore that $\D$ might also converge on some $i\in N$, even though it does not satisfy  the above condition.
For example in the delegation graph of $\D$, $i$ is not linked to a periodic strongly connected component.

\smallskip

We illustrate this alternative way of interpreting weighted delegations by discussing the game that would result from Example \ref{ex:game} using the utility function $\hat{U}$ of equation \eqref{eq:u_shares}. As follows we call $\hat{U}$-NE the NE under the utility of $\hat{U}$.

\begin{example} \label{ex:example1}
Consider the same $2$-player game of Example \ref{ex:game} but now using equation \eqref{eq:u_shares} to determine the utility function. 
We first consider the strategy of agent $1$, given the weighted strategy of agent $2$ is $D_2=(y_{21}, y_{22})$.
If $y_{21}\in (0,1]$, agent $1$ would vote on her own behalf, otherwise $\D$ is not aperiodic.
On the other hand, if $y_{21}=0$, i.e., agent $2$ is a self-voter, agent $1$ would choose to delegate full weight to agent $2$ since $x_{12}>x_{11}$.
We can obtain symmetric best response strategy for agent $2$ given $D_1=(y_{11}, y_{22})$.
Then both agents' optimal strategies are drawn in Fig.~\ref{fig:ex2}.
The $\hat{U}$-NE are $\begin{pmatrix} 0 & 1\\ 0 & 1\end{pmatrix}$ and $\begin{pmatrix} 1 & 0\\ 1 & 0\end{pmatrix}$.
\end{example}
\begin{figure}
\centering
\includegraphics[scale=0.2]{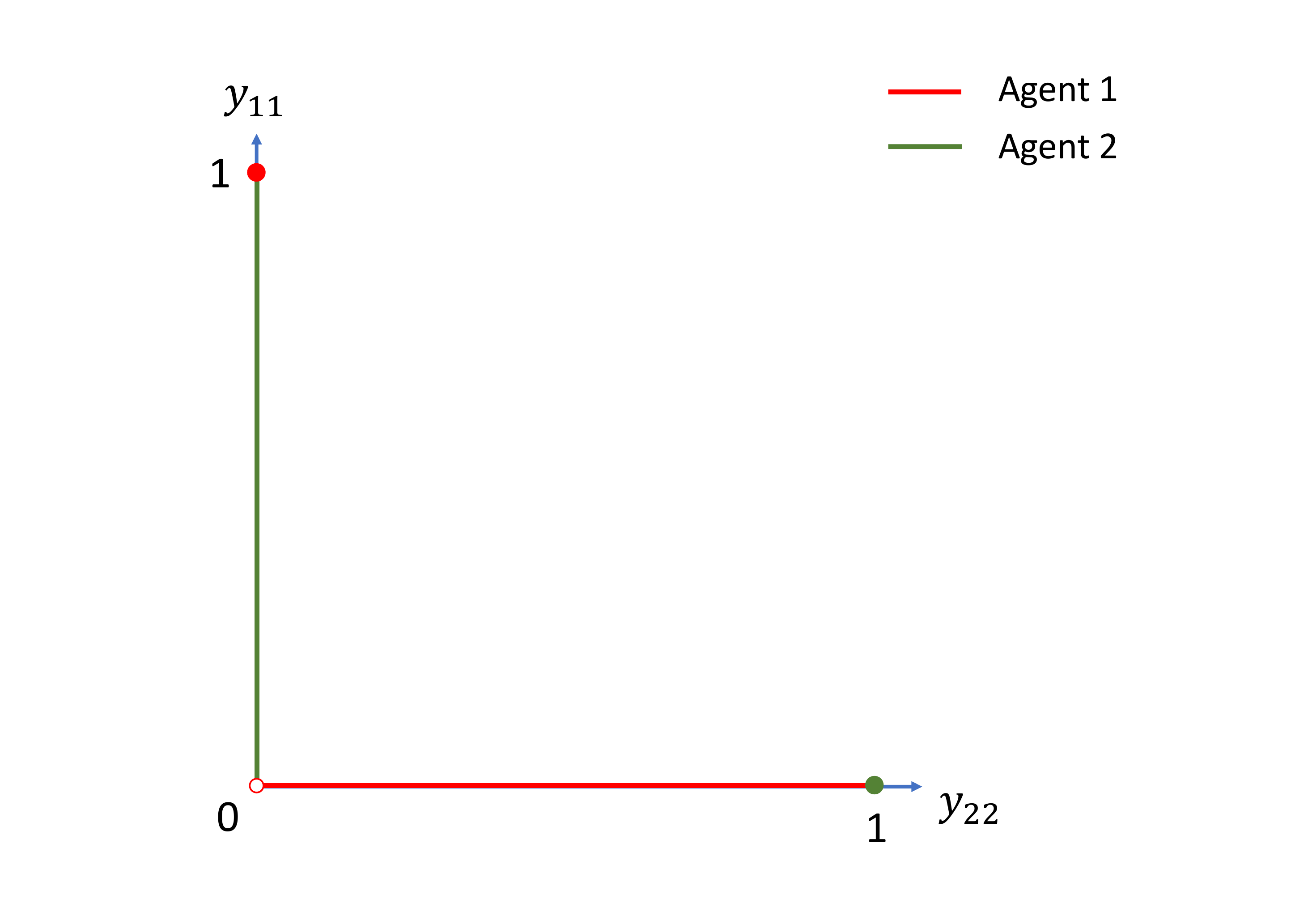}
\caption{Equilibria for the game of Example \ref{ex:example1}}
\label{fig:ex2}
\end{figure}

The example shows that the two notions of utility in equations \eqref{eq:u_prob} and \eqref{eq:u_shares} lead to different types of games, with different equilibria. As we show below, the two notions of utility are independent generalizations of the notion of utility under pure delegations.

\subsection{$U$- and $\hat{U}$-Equilibria in GD Games}

We finally establish a relation between the two types of equilibria in the class of GD games.
Note that the following result holds for general underlying networks.

\begin{theorem} \label{thm:bridge}
For any GD game, if a weighted profile $\D$ is a $U$-NE it is also a $\hat{U}$-NE.
\end{theorem}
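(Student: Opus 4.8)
The plan is to show that for every agent $i$ and every deviation $D_i'$ we have $\hat{U}_i(\D_{-i}, D_i') \le \hat{U}_i(\D)$, and to obtain this by sandwiching the shares-utility between the mixing-utility at the deviation and at the equilibrium. Concretely, I would prove the chain
\[
\hat{U}_i(\D_{-i}, D_i') \le U_i(\D_{-i}, D_i') \le U_i(\D) = \hat{U}_i(\D),
\]
where the middle inequality is exactly the hypothesis that $\D$ is a $U$-NE. So the work reduces to two facts: (a) that the two utilities coincide at the equilibrium profile, and (b) that $\hat{U}_i$ never exceeds $U_i$ on any profile obtained by a unilateral deviation.

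The starting point for both is Lemma \ref{lemma:acyclic}: since $\D$ is a $U$-NE of a GD game, its delegation graph is acyclic. I would first establish the core identity that on an acyclic profile the share each guru $j$ receives of $i$'s vote equals the probability that $j$ is $i$'s guru under mixing, i.e. $\chi_i^{(s)}(j) = \Pr(\d^*(i) = j)$. The reason is that both quantities are the $j$-th coordinate of $e_i \D^{t}$ for $t$ large: the shares process is by definition the walk $\chi_i^{(t+1)} = \chi_i^{(t)} \D$ started at $i$, while the mixing process draws one delegation per agent; because the graph is a DAG, every positive-probability delegation chain from $i$ visits pairwise distinct agents, so a single draw per agent and a memoryless re-draw at each hop induce the same law on the terminal guru. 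Acyclicity further forces absorption at pure self-voting gurus after finitely many hops, so $\D$ converges on $i$ and $\hat{U}_i(\D) = \sum_{j\in N} q_j \chi_i^{(s)}(j) = \sum_{j\in N} q_j \Pr(\d^*(i) = j) = U_i(\D)$, giving (a).

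For (b), I note that a unilateral deviation $D_i'$ can only introduce cycles through $i$, because $\D_{-i}$ is acyclic. If the deviation creates a reachable cycle, the shares process fails to stabilise on $i$, so $\hat{U}_i(\D_{-i}, D_i') = 0 \le U_i(\D_{-i}, D_i')$, the right-hand side being nonnegative with the cyclic realisations simply contributing $0$. If instead no cycle is created, the deviated profile is again acyclic and the core identity yields $\hat{U}_i(\D_{-i}, D_i') = U_i(\D_{-i}, D_i')$. In both cases $\hat{U}_i(\D_{-i}, D_i') \le U_i(\D_{-i}, D_i')$, and combining with the $U$-NE inequality closes the argument.

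I expect the main obstacle to be the finite-convergence claim underlying (a): to read off $\hat{U}_i(\D) = U_i(\D)$ I need $\D$ to actually converge on every agent, which requires the reachable gurus to be pure self-voters (absorbing states) rather than agents with a \emph{leaky} self-loop $0 < D_{jj} < 1$. Acyclicity excludes cycles of length at least two but not such self-loops directly. I would rule them out through the greedy structure: an agent who both self-votes and delegates with positive probability must be indifferent, so any chain she delegates along returns in expectation the same accuracy as her own, and --- absent ties among the reachable maximal-accuracy gurus --- this can only happen if that chain loops back to her, contradicting Lemma \ref{lemma:acyclic}. Handling the tie case carefully, where several equally accurate gurus are reachable, is the delicate point that the argument must address.
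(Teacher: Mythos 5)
Your sandwich strategy is genuinely different from the paper's argument, and it has a gap that I do not think can be closed in the form you propose. The paper argues entirely inside $\hat{U}$: at a $U$-NE of a GD game, Lemma~\ref{lemma:acyclic} plus greediness force every positive-probability delegation path out of $i$ to terminate at a maximally accurate agent reachable from $i$, so in the limit all of $i$'s weight sits on accuracy-$q^*$ agents and $\hat{U}_i(\D)=q^*$; since $\hat{U}_i$ at any profile is a sub-convex combination of accuracies of reachable agents, $q^*$ upper-bounds $\hat{U}_i(\D_{-i},D_i')$ for every deviation, and no comparison with $U_i$ at deviated profiles is ever needed. Your route instead rests on the ``core identity'' $\chi_i^{(s)}(j)=\Pr(\d^*(i)=j)$, and that identity is simply false on acyclic profiles: an agent $j$ with a leaky self-loop $0<D_{jj}<1$ carries mass $\Pr(\d^*(i)=j)>0$ under mixing, but under the shares dynamics everything parked at $j$ eventually drains downstream, so $\chi_i^{(s)}(j)=0$ in the limit and the process is never stationary after finitely many steps. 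You correctly flag this, but the tie case you defer is not a loose end you can tidy up later---it is exactly where the obstruction lives. With $q_1=q_2=q$, the profile $D_1=(\tfrac12,\tfrac12)$, $D_2=(0,1)$ is a legitimate $U$-NE (agent $1$ is indifferent, agent $2$ strictly prefers self-voting), agent $1$ has a leaky self-loop, and under the paper's official finite-convergence definition $\hat{U}_1(\D)=0\neq q=U_1(\D)$, so your claim (a) fails; the theorem itself only survives here if $\hat U$ is read via the limit of $\chi_i^{(t)}$, which is what the paper's proof tacitly does.

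That switch of reading then breaks your step (b). For cyclic deviations you need $\hat{U}_i(\D_{-i},D_i')=0$, which is true under finite convergence but not under the limit reading: take $q_i=q_k=q^*$, equilibrium strategies $D_{ji}=D_{jk}=\tfrac12$, $D_{kk}=1$, and the deviation $D'_{ij}=1$. The cyclic realisation $i\to j\to i$ contributes $0$ to $U_i$, so $U_i(\D_{-i},D_i')=\tfrac12 q_k$, while the decaying cycle pushes all limiting shares to $k$, giving $\hat{U}_i(\D_{-i},D_i')=q_k>U_i(\D_{-i},D_i')$---the middle of your sandwich reverses (the theorem's conclusion still holds there, since $q_k=q^*=\hat{U}_i(\D)$). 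So whichever convergence notion you fix, one half of the chain $\hat{U}_i(\D_{-i},D_i')\le U_i(\D_{-i},D_i')\le U_i(\D)=\hat{U}_i(\D)$ fails. The repair is to drop the comparison with $U$ altogether and argue as the paper does: show $\hat{U}_i(\D)$ already attains the global upper bound $q^*$ of $\hat{U}_i$ over all deviations.
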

\begin{proof}
For any agent $i\in N$, assume the highest-accuracy agent $i$ can access in the underlying network is $i^*\in N$ (there might be multiple highest-accuracy agents and similar argument can be applied), with accuracy of $q^*$.
Since $\D$ is a mixed NE and by Lemma~\ref{lemma:acyclic}, we have that for any $i$ and $j$, if $D_{ij}>0$, any path containing edge $(i,j)$ terminates at $i^*$.
Therefore we have that $\D$ converges and $i$'s full weight is delegated to $i^*$ in $\lim_{k\rightarrow \infty}w^{(k)}_i$.
That is, $i$ cannot further improve her utility, which implies $\D$ is a $\hat{U}$-NE.
\end{proof}
Note that the above theorem does not hold in the other direction. Two-player games offer a simple example. Infinitely many $\hat{U}$-NE exist where the most accurate agent is a sink in $\D$. However, the only $U$-NE in this case is the profile where the maximally accurate agent is a sink and the other agent delegates to her with probability $1$.

%%%%%%%%%%%%%%%%%%%%%%%%%%%%%%%%%

\section{Conclusions and Future Work} \label{sec:conclusions}

The paper has proposed a variant of liquid democracy where proxies are weighted by a probability of delegation or, alternative, by a share of the delegator's voting weight. We showed that this type of weighted delegations enable, through central coordination, optimal group accuracy and better equilibria. 

\smallskip

The work presented relies on one strong assumption, namely that agents can delegate to every other agent, so that the interaction network is complete. It will be a priority to lift this assumption in future work. In particular, it has been shown \cite{caragiannis19contribution} that, in the general case in which delegations are constrained by an underlying network, it is intractable to approximate the optimal group accuracy $q_\d$ that could be obtained by centrally determining a standard delegation graph $\d$. Adapting our Algorithm \ref{algo:maxMC} to such general setting would provide an interesting angle from which to approach this optimal delegation problem.

Then, while we studied weighted delegations in a majority voting context from a wisdom of the crowd perspective, it would be interesting to test them in settings involving different types of voting mechanisms. An example are random dictatorships of the sort used, for instance, in proof-of-stake blockchain protocols where stakes can be delegated to proxies in the network (e.g., Tezos \cite{allombert19introduction}). 

%%%%%%%%%%%%%%%%%%%%%%%%%%%%%%%%%%%%%%%%%%%%%%

%%%%%%%%%%%%%%%%%%%%%%%%%%%%%%%%%%%%%%%%%%%%%%%%%%%%%%%%%%%%%%%%%%%%%%%%%

% References
\bibliographystyle{plain}

\begin{thebibliography}{10}

\bibitem{Alger_2006}
Dan Alger.
\newblock {Voting by proxy}.
\newblock {\em Public Choice}, 126(1-2):1--26, jan 2006.

\bibitem{allombert19introduction}
Victor Allombert, Mathias Bourgoin, and Julien Tesson.
\newblock Introduction to the tezos blockchain.
\newblock In {\em 17th International Conference on High Performance Computing
  {\&} Simulation, {HPCS} 2019, Dublin, Ireland, July 15-19, 2019}, pages
  1--10, 2019.

\bibitem{liquid_feedback}
Jan Behrens, Axel Kistner, Andreas Nitsche, and Bj\"orn Swierczek.
\newblock {\em Principles of Liquid Feedback}.
\newblock Interaktieve Demokratie, 2014.

\bibitem{bloembergen19rational}
Daan Bloembergen, Davide Grossi, and Martin Lackner.
\newblock On rational delegations in liquid democracy.
\newblock In {\em Proceedings of the AAAI Conference on Artificial
  Intelligence}, volume~33, pages 1796--1803, 2019.

\bibitem{blum2016liquid}
Christian Blum and Christina~Isabel Zuber.
\newblock Liquid democracy: Potentials, problems, and perspectives.
\newblock {\em Journal of Political Philosophy}, 24(2):162--182, 2016.

\bibitem{boldi2011viscous}
Paolo Boldi, Francesco Bonchi, Carlos Castillo, and Sebastiano Vigna.
\newblock Viscous democracy for social networks.
\newblock {\em Communications of the ACM}, 54(6):129--137, 2011.

\bibitem{brill2018pairwise}
Markus Brill and Nimrod Talmon.
\newblock Pairwise liquid democracy.
\newblock In {\em IJCAI}, pages 137--143, 2018.

\bibitem{caragiannis19contribution}
I.~Caragiannis and E.~Micha.
\newblock A contribution to the critique of liquid democracy.
\newblock In {\em Proceedings of the Twenty-Eighth International Joint
  Conference on Artificial Intelligence, {IJCAI} 2019, Macao, China, August
  10-16, 2019}, pages 116--122, 2019.

\bibitem{christoff17binary}
Z.~Christoff and D.~Grossi.
\newblock Binary voting with delegable proxy: An analysis of liquid democracy.
\newblock In {\em Proceedings of the 16th Conference on Theoretical Aspects of
  Rationality and Knowledge (TARK'17)}, volume 251, pages 134--150. EPTCS,
  2017.

\bibitem{cohensius2017proxy}
Gal Cohensius, Shie Mannor, Reshef Meir, Eli Meirom, and Ariel Orda.
\newblock Proxy voting for better outcomes.
\newblock In {\em Proceedings of the 16th Conference on Autonomous Agents and
  MultiAgent Systems}, pages 858--866. International Foundation for Autonomous
  Agents and Multiagent Systems, 2017.

\bibitem{colley20smart}
R.~Colley, U.~Grandi, and A.~Novaro.
\newblock Smart voting.
\newblock In {\em Proceeding of the the 29th International Joint Conference on
  Artificial Intelligence (IJCAI), 2020}, 2020.

\bibitem{Condorcet1785}
Marquis~de Condorcet, M.J.A.N. de~C.
\newblock {\em Essai sur l'Application de l'Analyse \`{a} la Probabilit\'{e}
  des D\'{e}cisions Rendues \`{a} la Pluralit\'{e} des Voix}.
\newblock Imprimerie Royale, Paris, 1785.

\bibitem{10.2307/2285509}
Morris~H. DeGroot.
\newblock Reaching a consensus.
\newblock {\em Journal of the American Statistical Association},
  69(345):118--121, 1974.

\bibitem{elkind16rationalizations}
E.~Elkind and A.~Slinko.
\newblock Rationalizations of voting rules.
\newblock In {\em Handbook of Computational Social Choice}, chapter~8, pages
  169--196. Cambridge University Press, 2016.

\bibitem{escoffier19convergence}
B.~Escoffier, H.~Gilbert, and A.~Pass-Lanneau.
\newblock The convergence of iterative delegations in liquid democracy in a
  social network.
\newblock In {\em Proceedings of the International Symposium on Algorithmic
  Game Theory ({SAGT}'19)}, pages 284--297, 2019.

\bibitem{escoffier20iterative}
B.~Escoffier, H.~Gilbert, and A.~Pass-Lanneau.
\newblock Iterative delegations in liquid democracy with restricted
  preferences.
\newblock In {\em Proceedings of the 34th {AAAI} Conference on Artificial
  Intelligence (AAAI'20)}, pages 1926--1933, 2020.

\bibitem{10.1007/978-3-030-30473-7_19}
Bruno Escoffier, Hugo Gilbert, and Ad{\`e}le Pass-Lanneau.
\newblock The convergence of iterative delegations in liquid democracy in a
  social network.
\newblock In Dimitris Fotakis and Evangelos Markakis, editors, {\em Algorithmic
  Game Theory}, pages 284--297, Cham, 2019. Springer International Publishing.

\bibitem{gantmacher00theory}
F.~Gantmacher.
\newblock {\em The Theory of Matrices}.
\newblock AMS Chealsea Publishing, 1959.

\bibitem{golz2018fluid}
Paul G{\"o}lz, Anson Kahng, Simon Mackenzie, and Ariel~D Procaccia.
\newblock The fluid mechanics of liquid democracy.
\newblock In {\em Proceedings of the 14th Conference on Web and Internet
  Economics ({WINE}'18)}, pages 188--202, 2018.

\bibitem{green2015direct}
James Green-Armytage.
\newblock Direct voting and proxy voting.
\newblock {\em Constitutional Political Economy}, 26(2):190--220, 2015.

\bibitem{grofman83thirteen}
Bernard Grofman, Guillermo Owen, and Scott~L. Feld.
\newblock Thirteen theorems in search of the truth.
\newblock {\em Theory and Decision}, 15(3):261--278, 1983.

\bibitem{kahng18liquid}
A.~Kahng, S.~Mackenzie, and A.~Procaccia.
\newblock Liquid democracy: An algorithmic perspective.
\newblock In {\em Proc. 32nd AAAI Conference on Artificial Intelligence
  (AAAI'18)}, 2018.

\bibitem{Grammateia2020aamas}
Grammateia Kotsialou and Luke Riley.
\newblock Incentivising participation in liquid democracy with breadth-first
  delegation.
\newblock In {\em Proceedings of the International Foundation for Autonomous
  Agents and Multiagent Systems ({AAMAS '20})}, pages 638---644, Richland, SC,
  2020. {IFAAMAS}.

\bibitem{miller1969program}
James~C Miller.
\newblock A program for direct and proxy voting in the legislative process.
\newblock {\em Public choice}, 7(1):107--113, 1969.

\bibitem{nitzan82optimal}
S.~Nitzan and J.~Paroush.
\newblock Optimal decision rules in uncertain dichotomous choice situations.
\newblock {\em International Economic Review}, 23(2):289--297, 1982.

\bibitem{osborne94course}
M.~J. Osborne and A.~Rubinstein.
\newblock {\em A Course in Game Theory}.
\newblock MIT Press, 1994.

\bibitem{paulin20overview}
A.~Paulin.
\newblock An overview of ten years of liquid democracy research.
\newblock In {\em Proceedings of the 21st International Conference on Digital
  Government Research}, 2020.

\bibitem{proskurnikov17tutorial}
A.~Proskurnikov and R.~Tempo.
\newblock A tutorial on modeling and analysis of dynamic social networks. {Part
  I}.
\newblock {\em Annual Reviews in Control}, 43:65--79, 2017.

\bibitem{shapley84optimizing}
Lloyd Shapley and Bernard Grofman.
\newblock {Optimizing group judgmental accuracy in the presence of
  interdependencies}.
\newblock {\em Public Choice}, 43(3):329--343, 1984.

\bibitem{Tullock_1992}
Gordon Tullock.
\newblock {Computerizing politics}.
\newblock {\em Mathematical and Computer Modelling}, 16(8-9):59--65, aug 1992.

\bibitem{zhang21power}
Yuzhe Zhang and Davide Grossi.
\newblock Power in liquid democracy.
\newblock In {\em AAAI'21}, 2021.

\end{thebibliography}

%%%%%%%%%%%%%%%%%%%%%%%%%%%%%%%%%%%%%%%%%%%%%%%%%%%%%%%%%%%%%%%%%%%%%%%%%

% At the very end of the paper, please include your contact details:

\begin{contact}
Yuzhe Zhang\\
University of Groningen\\
Groningen, the Netherlands\\
\email{yoezy.zhang@rug.nl}
\end{contact}

\begin{contact}
Davide Grossi\\
University of Groningen \\
Groningen, the Netherlands\\
University of Amsterdam \\
Amsterdam, the Netherlands \\
\email{d.grossi@rug.nl}
\end{contact}

%%%%%%%%%%%%%%%%%%%%%%%%%%%%%%%%%%%%%%%%%%%%%%%%%%%%%%%%%%%%%%%%%%%%%%%%%

\end{document}